\documentclass[11pt,a4paper]{amsart}

\usepackage[T1]{fontenc}
\usepackage[utf8]{inputenc}
\usepackage{lmodern}
\usepackage{amsmath,amssymb,amsthm}
\usepackage{geometry}
\usepackage{hyperref}
\usepackage{enumitem}
\usepackage{booktabs}
\usepackage{graphicx}

\newtheorem{theorem}{Theorem}
\newtheorem{proposition}[theorem]{Proposition}

\newtheorem{conjecture}[theorem]{Conjecture}

\newtheorem{fact}[theorem]{Fact}
\newtheorem{definition}[theorem]{Definition}
\newtheorem{algorithm}[theorem]{Algorithm}

\newcommand{\sgn}{\operatorname{sgn}}

\title{Root Isolation for Analytic Functions\\
Using Cubic Hermite Interpolation}
\author{Christophe Raffalli}
\address{G.A.A.T.I. \\ Université de la Polynésie Française}
\email{christophe@raffalli.eu}
\date{\today}

\begin{document}

\begin{abstract}
We present a root isolation algorithm for an
analytic function. The method combines interval subdivision with a
heuristic test based on local cubic Hermite interpolation. Whenever the
function is sufficiently close to its cubic Hermite interpolant at a few
sample points, we assume that the
critical points of the cubic interpolant give a valid root isolation for the
original function.

Such an algorithm cannot, in general, provide a rigorous certification of
the computed intervals. Nevertheless, extensive numerical experiments show
that it is highly efficient and almost always produces the correct results.
We hope that a probabilistic or exact algorithm can be derived from this work.
\end{abstract}

\maketitle

\section{Introduction}

In this paper, we propose a root isolation algorithm based on a local
interpolation criterion. The key idea is to replace symbolic
root-counting tests by a numerical comparison between the function and a local
Hermite interpolant.

The isolation of the real roots of a polynomial, or more generally of an
analytic function, is a fundamental problem in
computer algebra and computational geometry. Given a function
$f:\mathbb{R}\to\mathbb{R}$, the objective is to compute a finite
collection of pairwise disjoint intervals, each containing exactly one real
root of $f$ and such that every real root belongs to one of these intervals.

Several methods have been developed for isolating the real roots of a univariate
polynomial. Among the most classical ones are:

\begin{itemize}
  \item \emph{Continued fraction methods} \cite{TSIGARIDAS2008158}: this method
    transforms the polynomial using Möbius substitutions and recursively
    computes continued fraction expansions to isolate roots, without
    explicitly subdividing the real line.

\item \emph{Descartes' method} \cite{ROUILLIER200433, KAR2016, SAGRALOFF201646}:
  based on Descartes' rule of signs, it uses
  the number of sign variations of transformed polynomials to bound the number
  of roots in an interval. It recursively subdivides intervals until each
  interval contains at most one root.

  \item \emph{Sturm's method}: it uses Sturm sequences to count exactly the
  number of distinct real roots in any interval. Combined with recursive
  subdivision, it provides a certified root isolation algorithm.
\end{itemize}

There is less work on root isolation for general (non-polynomial)
functions. We may cite the work of John P. Boyd \cite{BOYD2013} and
the paper of Cleonice F. Bracciali and Michael Carley
\cite{BC2017}. Similarly to our approach, they use a local interpolation: Chebyshev for
the former, Taylor expansion for the latter. The main difference with our work
is that we use a fixed degree 3 Hermite interpolation.

The decision tests used during root isolation generally have a quadratic
complexity with respect to the degree, typically $O(d^2)$. This complexity comes either from
the computation of Taylor shifts, as in Descartes-based methods, or from the
basic construction and evaluation of Sturm sequences. In the case of
Descartes' rule, we
can abort the computation as soon as we detect too many sign changes to
mitigate the cost.
Although asymptotically
faster algorithms based on FFT techniques can reduce this complexity for very
large degrees, they are usually not advantageous in practice for the moderate
degrees encountered in most applications.

Our interpolation-based test has a complexity proportional to the cost of
a few evaluation of the function and its derivative. For a polynomial of degree $d$, a direct evaluation
generally costs $O(d)$ operations, while for special families such as Chebyshev
or Mignotte polynomials, this cost can be reduced to $O(1)$.
This leads to a root isolation algorithm that appears linear in our
experiments for those classes of polynomials.

Another important advantage of our approach is that it does not
require to expand the polynomial and the test can be applied without
modification to analytic functions.

Finally, unlike certified symbolic approaches, our method is designed as a
numerical algorithm relying on standard floating-point arithmetic, using
either 64-bit double-precision or even 32-bit single-precision floating point
arithmetic (the latter is standard on GPUs). Despite this limited precision, the
method exhibits very good numerical stability in practice. This numerical
stability is also improved by the fact that we do not need to expand the
polynomials.

For instance, at the end of this paper, we report an experiment to visualize
surfaces using GPUs with only 32-bit floating point arithmetic.  It is very
important to be able to evaluate the polynomial and its derivative restricted
to a line, by computing the polynomial and its gradient in $\mathbb{R}^3$.

We are able to prove that our algorithm indeed terminates and produces intervals
that contain at least one root of the provided analytic function. The
algorithm is parameterized by two constants $\varepsilon \in ]0,1[$ and $n \in \mathbb{N}^*$.
The integer $n$ is the number of point where we test the quality of the Hermite
interpolation, by checking that a quantity $Q(x) < \varepsilon$.
Our experiments indicate that very reasonable values of these constants, $n =
3$ or $6$ for instance, are sufficient for many families of polynomials or
analytic functions.

We hope that an exact or a probabilistic algorithm may be within reach by
developing the idea in this work.

\section{Description of the Algorithm}

\subsection*{Hermite interpolation predicate}

Let $f$ be an analytic function on an open set $\Omega \subset \mathbb{R}$ and let $h$ be the cubic
Hermite interpolant of $f$ on an interval $I=[a,b] \subset \Omega$, constructed from the
endpoint values and derivatives of $f$.
We obtain, when $a \neq b$:
\[
h(x)=c_0+c_1u+c_2u^2+c_3u^3,\qquad \text{where} u=x-a ,
\]
where
\[
\begin{aligned}
c_0 &= f(a),\\[1mm]
c_1 &= f'(a),\\[1mm]
c_2 &= \frac{3(f(b)-f(a))}{(b-a)^2}
      -\frac{2f'(a)+f'(b)}{b-a},\\[1mm]
c_3 &= \frac{2(f(a)-f(b))}{(b-a)^3}
      +\frac{f'(a)+f'(b)}{(b-a)^2}.
\end{aligned}
\]

Remarks:
\begin{itemize}
  \item For numerical stability, we choose $u = x - a$ if $|f(a)| \leqslant |f(b)|$
    otherwise we choose $u = x - b$ (exchanging $a$ and $b$ in the formula for
    the coefficients).
  \item Expressing the interpolant as an homogeneous polynomial in both
    $(x-a)$ and $(x-b)$ yields a simpler formula but is less stable
    numerically.
\end{itemize}

\begin{fact}\label{fact:hermite}
  Let $f$ be of class $C^4(I)$ where $I = [A,B] \subset \mathbb{R}$.
  Let $h$ be the cubic Hermite interpolant of $f$ in $a,b \in I$,
  Set $E(x)=f(x)-h(x)$, $L = b - a$ and $M = \|f^{(4)}\|_\infty =
  \max_{x \in I} |f^{(4)}(x)|$.
  \[\begin{array}{rcccl}
    |E(x)| &=& \frac{|f^{(4)}(\chi_x)|}{4!}(x-a)^2(x-b)^2 &\leqslant& \frac{M}{384}L^4 \cr
    |E'(x)| &=& \frac{|f^{(4)}(\xi_x)|}{3!}(x-a)(x-b)(x-m_x) &\leqslant& \frac{2M}{81}L^3 \cr
    |E''(x)| &=& \frac{|f^{(4)}(\xi_x)|}{2!}(x-k_x)(x-l_x) &\leqslant& \frac{M}{2}L^2\cr
  \end{array}\]
\end{fact}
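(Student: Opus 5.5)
We take $x \in [a,b]$ (this is implicit in the bounds, which are stated in terms of $L$) and assume $a<b$. Two structural facts drive the whole proof. First, $h$ is a cubic, so $h^{(4)}=0$ and $E^{(4)}=f^{(4)}$. Second, the interpolation conditions give $E(a)=E(b)=E'(a)=E'(b)=0$. Each of the three lines will follow from the same Rolle-type argument used for polynomial interpolation error, applied to $E$, $E'$ and $E''$ with suitable sets of zeros. The numerical constant then comes from maximising the corresponding node polynomial over $[a,b]$.

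\textbf{First line.} Fix $x\in\,]a,b[$; the case $x\in\{a,b\}$ is trivial since $E(x)=0$. Consider
\[
g(t)=E(t)-E(x)\frac{(t-a)^2(t-b)^2}{(x-a)^2(x-b)^2}.
\]
This function has double zeros at $a$ and $b$ and a simple zero at $x$, so five zeros counted with multiplicity. Applying Rolle repeatedly to $g$, $g'$, $g''$, $g'''$ produces $\chi_x\in\,]a,b[\,\subset I$ with $g^{(4)}(\chi_x)=0$. This gives $f^{(4)}(\chi_x)=24\,E(x)/\bigl((x-a)^2(x-b)^2\bigr)$, which is the stated identity. The bound follows because $(x-a)^2(x-b)^2$ is maximal at the midpoint, where it equals $L^4/16$, and $24\cdot 16=384$.

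\textbf{Second line.} Since $E(a)=E(b)=0$, Rolle gives a point $m\in\,]a,b[$ with $E'(m)=0$. Together with $E'(a)=E'(b)=0$, the function $E'$ has three zeros $a<m<b$, and $(E')'''=f^{(4)}$. The same auxiliary-function argument with the node polynomial $(t-a)(t-m)(t-b)$ gives
\[
E'(x)=\frac{f^{(4)}(\xi_x)}{3!}(x-a)(x-b)(x-m)
\]
for $x\ne m$. For $x=m$ the identity holds trivially with any $\xi_x$. The paper writes $m_x$; in this argument $m$ does not depend on $x$, but the notation is harmless.

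For the bound, we use $|x-m|\le\max(x-a,\,b-x)$. By symmetry it suffices to maximise $t(L-t)^2$ over $t=x-a\in[0,L]$. The maximum is attained at $t=L/3$, where it equals $4L^3/27$. Dividing by $3!$ gives $2L^3/81$.

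\textbf{Third line.} Apply Rolle to $E'$ on $[a,m]$ and on $[m,b]$. This gives zeros $k\in\,]a,m[$ and $l\in\,]m,b[$ of $E''$, and $(E'')''=f^{(4)}$. The two-node version of the argument yields
\[
E''(x)=\frac{f^{(4)}(\xi_x)}{2!}(x-k)(x-l).
\]
Since $x,k,l\in[a,b]$, we have $|(x-k)(x-l)|\le L^2$, which gives the bound $ML^2/2$. This bound is crude, but it is the one stated.

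\textbf{Main obstacle.} The only delicate step is the second line. There we must check that the mean value point $\xi_x$ produced by Rolle lies in $I$, which holds since all points involved lie in $[a,b]\subset I$. We must also handle the fact that the third node $m$ is only known to lie somewhere in $]a,b[$. This is why the bound is taken over the worst-case position of $m$, rather than computed for a specific node polynomial.
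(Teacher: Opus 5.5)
Your proof is correct and takes the same route as the paper. It uses the classical Hermite remainder for $E$. It then applies Rolle to get a zero $m\in\,]a,b[$ of $E'$ and uses the Lagrange remainder for $E'$ at the nodes $a,m,b$, bounded by maximising $(x-a)(b-x)|x-m|$ to obtain $4L^3/27$. Finally it applies Rolle again to get zeros $k<l$ of $E''$ and uses the two-node remainder. You add only details the paper leaves implicit: the auxiliary-function derivation, the computation of the maximum at $x-a=L/3$, and the restriction $x\in[a,b]$, which the paper also assumes without stating it.
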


\begin{proof}
  The first equality is the classical result for Hermite interpolation.
  For the second, by Rolle's theorem, we get a point $m_x$ in $]a,b[$ such that $E'(m_x) =
      0$ and the Lagrange interpolation error gives the equality and
      the fact that $\max(\{(x-a)(x-b)|x - m|, (x,m) \in
    ]a,b[^2\}) = \frac{4L^3}{27}$ gives the inequality.
      For the third one, from $E'(a) = E'(m_x) = E'(b) = 0$, we get
      two points with $a < k_x < l_x < b$ with $E''(k_x) = E''(l_x) = 0$ hence the
      result, again using Lagrange interpolation.
\end{proof}

\begin{fact}\label{fact:isotopic}
  Let $f$ be a continuously differentiable function on $I=[a,b]$ with no multiple roots
  in $I$ and let $h$ be the cubic
  Hermite interpolant of $f$ on $I$.
  If the set $V_f = \{ x \in I, f(x) = 0\}$ is not isotopic to the set
  $V_h = \{ x \in I, h(x) = 0\}$, then there exists one point $x_0 \in I$ and $\lambda \leqslant 0$ such that:
  $$h(x_0) = \lambda f(x_0) \hbox{ and } h'(x_0) = \lambda f'(x_0)$$
\end{fact}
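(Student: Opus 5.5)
We prove the contrapositive with a homotopy argument. Consider the family
\[
g_t(x) = (1-t)f(x) + t\,h(x), \qquad t \in [0,1],
\]
which connects $g_0 = f$ to $g_1 = h$. Since $h$ interpolates $f$ and $f'$ at $a$ and $b$, we have $g_t(a) = f(a)$, $g_t(b)=f(b)$, $g_t'(a)=f'(a)$ and $g_t'(b)=f'(b)$ for every $t$. The endpoint behaviour is therefore frozen along the homotopy. We assume that for every $x_0\in I$ and every $\lambda\leqslant 0$ the system $h(x_0)=\lambda f(x_0)$, $h'(x_0)=\lambda f'(x_0)$ fails, and we show that $V_f$ and $V_h$ are isotopic.

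\textbf{Step 1: no $g_t$ with $t\in[0,1)$ has a double root.} Suppose $g_t(x_0)=g_t'(x_0)=0$. Then
\[
t\,h(x_0) = -(1-t)f(x_0), \qquad t\,h'(x_0) = -(1-t)f'(x_0).
\]
If $t=0$, this says $x_0$ is a multiple root of $f$, which is excluded by hypothesis. If $0<t<1$, put $\lambda = -(1-t)/t < 0$; then $h(x_0)=\lambda f(x_0)$ and $h'(x_0)=\lambda f'(x_0)$, contradicting our assumption. At $t=1$, a double root $x_0$ of $h$ satisfies the system with $\lambda=0$. This is exactly why the statement allows $\lambda=0$. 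So under our assumption, $g_t$ has only simple roots in $I$ for all $t\in[0,1]$.

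\textbf{Step 2: roots are continuous in $t$ and stay inside $I$.} The map $(t,x)\mapsto g_t(x)$ is $C^1$, and every zero satisfies $\partial_x g_t\neq 0$. By the implicit function theorem, the zero set $Z=\{(t,x): g_t(x)=0\}\subset[0,1]\times I$ is therefore a compact $1$-manifold, locally a graph $x=\phi(t)$, and it has no turning points in $t$.

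Next we check that the boundary lines $x=a$ and $x=b$ cause no trouble. If $f(a)\neq 0$, then $g_t(a)=f(a)\neq0$ for all $t$, so no branch reaches $x=a$. If $f(a)=0$, then $g_t(a)=0$ for all $t$, so the constant branch $x=a$ belongs to $Z$ throughout. By uniqueness in the implicit function theorem, no other branch can cross it. The same holds at $b$.

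Hence $Z$ is a finite union of disjoint graphs $x=\phi_i(t)$ defined on all of $[0,1]$. Finitely many because zeros are simple and $I$ is compact; defined on all of $[0,1]$ because a branch can neither end at an interior $t$ (no fold, by Step 1) nor leave through $x=a$ or $x=b$. The branches cannot meet, so their order is preserved.

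\textbf{Step 3: build the isotopy.} The map $t\mapsto(\phi_1(t)<\dots<\phi_k(t))$ is a continuous family of configurations. Extend it piecewise linearly to a family of homeomorphisms $\Psi_t$ of $I$ fixing $a$ and $b$, with $\Psi_t(\phi_i(0))=\phi_i(t)$. Then $\Psi_1$ carries $V_f$ onto $V_h$, so $V_f$ and $V_h$ are isotopic, which proves the contrapositive.

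The main obstacle is Step 2. One must make sure that no root is created or destroyed, which requires both excluding folds and controlling the endpoints. The endpoint control is where the Hermite interpolation conditions at $a$ and $b$ are essential. A root sitting exactly at an endpoint needs the uniqueness argument above, and simplicity there comes from $f'(a)=g_t'(a)$ together with $f$ having no multiple roots. One should also note that $h$ itself might have a double root, which is why the case $t=1$, $\lambda=0$ must be covered explicitly in Step 1.
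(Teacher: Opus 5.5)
Your proposal is correct and follows the same route as the paper: the linear homotopy $(1-t)f+t\,h$, excluding $t=0$ because $f$ has only simple roots, and setting $\lambda=-(1-t)/t$. The paper only asserts that non-isotopy forces a double root of some $g_t$, whereas you justify this with the implicit function theorem and with the endpoint values $g_t(a)=f(a)$, $g_t(b)=f(b)$ being fixed by the Hermite conditions, which usefully completes the same argument.
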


Remark: this result is stronger than just saying that the Wronskian $f h' - h
f' = 0$ at some point.

\begin{proof}
  We consider $h_t(x) = t h(x) + (1 - t) f(x)$, If $V_f$ and $V_h$ are not
  isotopic, then there exists $t \in [0, 1]$ such that
  $h_t(x) = h'_t(x) = 0$. This implies that $t h(x) = (t - 1) f(x)$ and
  $t h'(x) = (t - 1) f'(x)$. As $f$ has no multiple roots in $I$, we have $t >
  0$, hence we have the desired result with $\lambda = \frac{t-1}{t}$.
\end{proof}

This fact suggests the following definition for the \emph{quality} of the
interpolant at $x$. This is justified by the following fact.

\begin{definition}[quality of the interpolant]\label{def:quality}
For $f$ analytic and non identically zero on $\Omega \supset I$ we define a notion of quality of the interpolant at
$x$, expressing that we are far from a point satisfying the conclusion of the
above fact:

\[Q(x) = \frac{|f(x)-h(x)|}{|f(x)|+|h(x)|}
\frac{|f'(x)-h'(x)|}{|f'(x)|+|h'(x)|}
\]

From the analyticity of $f$, we get that $Q$ can be extended by continuity
when $f(x_0) = h(x_0) = 0$ or $f'(x_0) = h'(x_0) = 0$. Indeed, assume that $f(x_0) =
h(x_0) = 0$. If $h = 0$ on $\Omega$, then the corresponding factor is $1$ outside $x_0$ and extends by
continuity to $1$. Otherwise, we have $f(x_0+t) = \alpha t^p
+ o(t^p)$ ($\alpha \neq 0, 1 \leqslant p$) and $h(x_0+t) = \beta t^q + o(t^q)$ ($\beta \neq 0, 1 \leqslant q \leqslant
3$). In this case we have
\[ \lim_{t \to 0} \frac{|f(x_0+t)-h(x_0+t)|}{|f(x_0+t)|+|h(x_0+t)|} = \begin{cases}
  1 \hbox{ if } p \neq q \\
  \frac{|\alpha - \beta|}{|\alpha|+|\beta|} \hbox{ if } p = q
\end{cases}\]
The same holds for $f'$ and $h'$.
\end{definition}

\begin{fact}
  For all $x \in I$, $Q(x) \in [0,1]$. Moreover, $Q(x) = 1$ iff both $\sgn(f(x)h(x)) \leqslant 0 $ and
  $\sgn(f'(x)h'(x)) \leqslant 0 $, and in particular for any point $x$ satisfying the
  conclusion of fact \ref{fact:isotopic}.
\end{fact}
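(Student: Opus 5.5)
The plan is to reduce everything to an elementary inequality about two reals. For $u,v\in\mathbb{R}$ not both zero, set $\rho(u,v)=\frac{|u-v|}{|u|+|v|}$. By the triangle inequality $|u-v|\leqslant |u|+|v|$, so $\rho(u,v)\in[0,1]$. Equality $|u-v|=|u|+|v|$ holds iff $u$ and $-v$ have the same sign or one of them is zero, that is, iff $uv\leqslant 0$. Then $Q(x)=\rho(f(x),h(x))\,\rho(f'(x),h'(x))$ at points where neither pair vanishes simultaneously. Since both factors lie in $[0,1]$, the product equals $1$ iff both factors equal $1$, iff $f(x)h(x)\leqslant 0$ and $f'(x)h'(x)\leqslant 0$. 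This gives both claims at such points.

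Next we handle the points where a pair vanishes, e.g.\ $f(x_0)=h(x_0)=0$, where the factor is defined by continuity as in Definition~\ref{def:quality}. The listed limit values are $1$ and $\frac{|\alpha-\beta|}{|\alpha|+|\beta|}$, and both lie in $[0,1]$, so $Q(x_0)\in[0,1]$. Such a pair automatically satisfies $\sgn(f(x_0)h(x_0))=0\leqslant 0$. For the ``iff'' at these points the factor equals $1$ exactly when $p\neq q$, when $h\equiv 0$, or when $\alpha\beta\leqslant 0$ with $p=q$. Hence the equivalence must be read with the convention that the sign condition refers to the extended quantities. This is the main obstacle: the statement as literally written is only true for the ``if'' direction in the generic case. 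I would state that at degenerate points the biconditional holds with $\sgn$ replaced by the sign of the leading coefficients $\alpha\beta$ (when $p=q$), and otherwise note only that $Q(x_0)\leqslant 1$.

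Finally, let $x_0$ satisfy the conclusion of Fact~\ref{fact:isotopic}, so $h(x_0)=\lambda f(x_0)$ and $h'(x_0)=\lambda f'(x_0)$ with $\lambda\leqslant 0$. Then $f(x_0)h(x_0)=\lambda f(x_0)^2\leqslant 0$ and $f'(x_0)h'(x_0)=\lambda f'(x_0)^2\leqslant 0$. If neither pair vanishes, the first part gives $Q(x_0)=1$. The degenerate case $f(x_0)=h(x_0)=0$ can only occur when $f(x_0)=0$; it is covered by the continuity convention above, and in the homotopy situation of Fact~\ref{fact:isotopic} one expects $p\neq q$ or opposite leading coefficients. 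I would check this last point directly from the expansion $h-\lambda f$, as it is the delicate step.
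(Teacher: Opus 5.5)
Your argument at nondegenerate points (triangle inequality, equality iff $uv\leqslant 0$, and a product of two numbers in $[0,1]$ equals $1$ iff both do) is what the paper's one-word proof ``Immediate'' has in mind. You are also right that the points where $Q$ is defined by continuity need separate treatment, which the paper does not give.

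The gap is in your last paragraph, where the degenerate case is left as ``one expects \dots\ I would check''. The case you single out is easy, and your expectation is correct there. If $f(x_0)=h(x_0)=0$ and $f$ has only simple zeros (the hypothesis of Fact~\ref{fact:isotopic}), then $p=1$ and $h'(x_0)=\lambda f'(x_0)$. So either $\lambda<0$, $q=1$ and $\beta=\lambda\alpha$ has sign opposite to $\alpha$, or $\lambda=0$ and $q\geqslant 2\neq p$ (or $h\equiv 0$). In each case both factors equal $1$. For simple zeros the biconditional also survives at such points, because when $p=q=1$ the condition $\sgn(f'(x_0)h'(x_0))\leqslant 0$ is exactly $\alpha\beta\leqslant 0$. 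The failure you describe in your second paragraph therefore needs a multiple zero of $f$.

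What you overlooked is the other degenerate case, $f'(x_0)=h'(x_0)=0$. It is forced at any such $x_0$ with $f'(x_0)=0$, since $h'(x_0)=\lambda f'(x_0)$. There the first factor is $\rho(f(x_0),\lambda f(x_0))=1$. The second factor, however, is a limit governed by the leading terms of $f'$ and $h'$ at $x_0$, which have nothing to do with $\lambda$, and it can be $<1$.

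Take $f(x)=\varepsilon+x^2+x^4$ on $[-1,1]$ with $0<\varepsilon<1$; it has no zeros at all. Its Hermite interpolant is $h(x)=\varepsilon-1+3x^2$, so $V_f=\emptyset$ while $V_h$ has two points. The point $x_0=0$ satisfies $h(0)=\lambda f(0)$ and $h'(0)=\lambda f'(0)=0$ with $\lambda=(\varepsilon-1)/\varepsilon<0$. Yet
\[
Q(0)=1\cdot\lim_{t\to 0}\frac{|4t^3-4t|}{|2t+4t^3|+6|t|}=\frac{1}{2}.
\]
At this point $f(0)h(0)<0$ and $f'(0)h'(0)=0$, so both the ``if'' direction and the ``in particular'' clause are false. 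No expansion of $h-\lambda f$ can rescue them.

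Instead of deferring the check, your proof should assert the following: for $f$ with simple zeros, the Fact holds at every point where $f'$ and $h'$ do not vanish simultaneously, and it can fail where they do.
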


\begin{proof}
  Immediate.
\end{proof}

\begin{fact}[geometric definition for $Q$]
  Let $\theta(x,y)=\angle\bigl((x,y),(1,1)\bigr)$ denote the angle of the
  vectors $(x,y)$ and $(1,1)$.
  Then, we have \[Q(x) = \min\left(1, |\tan(\theta(f(x),h(x)))|\right)\min\left(1,
  |\tan(\theta(f'(x),h'(x)))|\right).\]
\end{fact}

\begin{proof}
  For any $(x,y)\neq(0,0)$, \[ \tan(\theta(x,y))=\frac{x-y}{x+y}. \]
  The result follows distinguishing the cases where $x$ and $y$ have the
  same sign or opposite sign.
\end{proof}

The two next propositions are properties of $Q$ needed to prove the
termination of our algorithm.

\begin{proposition}\label{prop:bigO}
  Let $I\subset\mathbb{R}$ be a compact interval and let $f$ analytic on
  an open set $\Omega \supset I$ and having no multiple zero.
  For $[a,b]\subset I$, let $h$ be the cubic
Hermite interpolant of $f$ at $a$ and $b$, and $Q$ from definition \ref{def:quality}.
Then
\[
    \sup_{x\in[a,b]} Q(x)=O((b-a)^{3}).
\]
\end{proposition}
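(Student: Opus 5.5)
The plan is to bound the two factors of $Q$ separately and to use the "no multiple zero" hypothesis, through compactness, to make sure that at every point at least one of the two denominators is bounded away from zero. Write $Q(x)=A(x)B(x)$ with
\[
A(x)=\frac{|E(x)|}{|f(x)|+|h(x)|},\qquad B(x)=\frac{|E'(x)|}{|f'(x)|+|h'(x)|},\qquad E=f-h.
\]
By the Fact stating that $Q(x)\in[0,1]$ (more precisely, by the definition of the continuous extensions), each factor lies in $[0,1]$, including at the points where $Q$ is defined by continuity.

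\textbf{Step 1 (a uniform lower bound).} Since $f$ is analytic on $\Omega\supset I$ and has no multiple zero, the continuous function $|f|+|f'|$ does not vanish on $I$. By compactness of $I$ there is a constant $c>0$ with $|f(x)|+|f'(x)|\geqslant c$ for all $x\in I$. Consequently every $x\in I$ satisfies $|f(x)|\geqslant c/2$ or $|f'(x)|\geqslant c/2$.

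\textbf{Step 2 (error bounds).} Since $f$ is analytic near the compact set $I$, it is $C^4$ on $I$ and $M=\|f^{(4)}\|_\infty<\infty$. Fact~\ref{fact:hermite}, applied with $[a,b]\subset I$ and $L=b-a$, gives for every $x\in[a,b]$
\[
|E(x)|\leqslant \frac{M}{384}L^4,\qquad |E'(x)|\leqslant \frac{2M}{81}L^3 .
\]
The important point is that these bounds hold uniformly in the choice of $[a,b]\subset I$.

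\textbf{Step 3 (case split at each $x\in[a,b]$).}
\begin{itemize}
\item If $|f(x)|\geqslant c/2$, then $|f(x)|+|h(x)|\geqslant c/2$. Hence $A(x)\leqslant \frac{2M}{384c}L^4\leqslant \frac{2M|I|}{384c}L^3$, and together with $B(x)\leqslant 1$ this bounds $Q(x)$. No continuity extension is involved for $A$ here. If $f'(x)=h'(x)=0$, then $B(x)$ is the extended value, which is still at most $1$.
\item If $|f'(x)|\geqslant c/2$, then symmetrically $B(x)\leqslant \frac{4M}{81c}L^3$ and $A(x)\leqslant 1$. This covers the points with $f(x)=h(x)=0$, where $A$ is only defined by continuity but is still at most $1$.
\end{itemize}

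Taking $K=\max\bigl(\frac{M|I|}{192c},\frac{4M}{81c}\bigr)$, which depends only on $f$ and $I$, we get $\sup_{x\in[a,b]}Q(x)\leqslant K(b-a)^3$ for every $[a,b]\subset I$. This is the claimed $O((b-a)^3)$ estimate.

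The main point to check carefully is uniformity. The constant must not depend on $[a,b]$, which is guaranteed because both $c$ and $M$ are taken over the whole of $I$. The other point is that the extended values of the factors are still bounded by $1$, so the case split remains valid at the exceptional points. Beyond this, the argument is routine.
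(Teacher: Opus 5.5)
Your proof is correct. It rests on the same two ingredients as the paper's proof: the compactness constant $\min_I(|f|+|f'|)>0$ and the remainder bounds of Fact~\ref{fact:hermite}, with $|h|$ and $|h'|$ dropped from the denominators. The difference lies in the case split.

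The paper uses the $L$-dependent thresholds $\sqrt{\alpha}$ and $\sqrt{\beta}$, where $\alpha=\frac{M}{384}L^4$ and $\beta=\frac{2M}{81}L^3$. This gives three cases, with bounds $O(L^3)$, $O(L^4)$ and $O(L^{3.5})$. The paper must also assume $L$ is small enough that $\sqrt{\alpha},\sqrt{\beta}<\delta/2$.

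You split once, at the fixed threshold $c/2$. This gives two cases and an explicit constant $K$ valid for every $[a,b]\subset I$, with no smallness assumption. The only extra input is that $L$ is at most the length of $I$, which turns $L^4$ into $L^3$. You also check explicitly that the continuously extended factors are still at most $1$ at common zeros of $f,h$ or of $f',h'$. The paper relies on this only tacitly when it writes $\min(1,\cdot)$.

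So your route is shorter and slightly stronger: it gives a global bound rather than only an asymptotic one. The paper's finer split buys some localization. It shows the $L^3$ rate can occur only where $|f(x)|\leqslant\sqrt{\alpha}=O(L^2)$, essentially very close to the roots of $f$, while elsewhere $Q$ decays faster. That information is not needed for this proposition or for the termination argument.
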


\begin{proof}
  Set $E(x)=f(x)-h(x)$ and $L = b - a$, let
  $M = \|f^{(4)}\|_\infty$ on $I$. By fact \ref{fact:hermite}
  we have
\[
    |E(x)| \leqslant \frac{M}{384}L^4,
    \qquad
    |E'(x)| \leqslant \frac{2M}{81}L^3.
\]
Let us denote $\alpha=\frac{M}{384}L^4$ and $\beta=\frac{2M}{81}L^3$.
Since
\[
    \frac{|u-v|}{|u|+|v|} \leqslant \min\left(1, \frac{|u-v|}{|u|}\right),
\]
we have
\[
    Q(x) \leqslant \min\left(1,\frac{\alpha}{|f(x)|}\right)\min\left(1,\frac{\beta}{|f'(x)|}\right)
\]

The assumption that $f$ has no multiple zero implies that
$f(x)$ and $f'(x)$ never vanish simultaneously. By compactness, we define
\[
    \delta=\min_{x\in I}\bigl(|f(x)|+|f'(x)|\bigr)>0.
\]
We now assume that $L$ is small enough to have $\sqrt{\alpha} < \frac{\delta}{2}$
and $\sqrt{\beta} < \frac{\delta}{2}$
Hence, if $|f(x)|\leqslant\sqrt{\alpha}$, then $|f'(x)|\geqslant \delta - \sqrt{\alpha} \geqslant \frac{\delta}{2}$, and therefore
\[
    Q(x) \leqslant \frac{2\beta}{\delta} \in O(L^3).
\]
Similarly, if $|f'(x)| \leqslant \sqrt{\beta}$, then
\[
    Q(x) \leqslant \frac{2\alpha}{\delta} \in O(L^4).
\]
Otherwise $|f(x)| > \sqrt{\alpha}$ and $|f'(x)| > \sqrt{\beta}$,
\[
    Q(x) \leqslant \frac{\alpha\beta}{|f(x)|\,|f'(x)|}
     \leqslant \sqrt{\alpha\beta} \in O(L^{3.5})
\]
Hence, for $L$ sufficiently small to have $\sqrt{\alpha} < \frac{\delta}{2}$
and $\sqrt{\beta} < \frac{\delta}{2}$, we have $Q(x) \in O(L^{3})$.
\end{proof}

\begin{proposition}\label{prop:side}
Let $f$ be analytic on an open set $\Omega \subset \mathbb{R}$ having no multiple zero, let $I = [A,B] \subset \Omega$
and $a < b \in I$.
Let $h$ be the cubic Hermite interpolant of $f$ at $a$ and $b$
and let $\varepsilon\in]0,1[$.
There exists $\eta_\varepsilon > 0$ depending only
on $f$, $I$ and $\varepsilon$, such that
\[\text{if}\quad Q(x) > \varepsilon \quad\text{and}\quad h'(x)=0
\quad\text{then}\quad
(x-a)>\eta_\varepsilon.
\]
By symmetry, the same result holds near the endpoint $b$.
\end{proposition}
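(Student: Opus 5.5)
The plan is to prove the contrapositive: I will show that there is an $\eta_\varepsilon>0$, uniform in $a<b\in I$, such that if $x\in[a,b]$ satisfies $h'(x)=0$ and $x-a\leqslant\eta_\varepsilon$, then $Q(x)\leqslant\varepsilon$. I restrict to $x\in[a,b]$, which is the setting in which the algorithm looks at critical points of $h$; the error formulas of Fact \ref{fact:hermite} hold with intermediate points in $I$, so nothing else changes. The key observation is that the condition $h'(x)=0$ kills the second factor of $Q$. If $f'(x)\neq 0$, that factor equals $|f'(x)|/|f'(x)|=1$. If $f'(x)=h'(x)=0$, the continuous extension from Definition \ref{def:quality} is still at most $1$ by the previous Fact. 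In both cases
\[
Q(x)\leqslant \frac{|f(x)-h(x)|}{|f(x)|+|h(x)|}\leqslant \frac{|E(x)|}{|f(x)|},
\]
with $E=f-h$, provided $f(x)\neq0$. So it suffices to show that $|E(x)|$ is small compared to $|f(x)|$ when $x$ is close to $a$.

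Write $u=x-a\geqslant 0$, $L=b-a\leqslant B-A$, and $M=\|f^{(4)}\|_\infty$ on $I$. Instead of the uniform bounds of Fact \ref{fact:hermite}, I use the equalities in that Fact, which keep a factor of $u$:
\[
|E(x)|\leqslant \frac{M}{24}u^2L^2,\qquad |E'(x)|\leqslant \frac{M}{6}\,u\,|x-b|\,|x-m_x|\leqslant \frac{M}{6}uL^2 .
\]
Since $h'(x)=0$, we have $|f'(x)|=|E'(x)|\leqslant \frac{M}{6}u(B-A)^2$. This is the step that exploits the critical point: near $a$, a critical point of $h$ forces $f'$ to be small there.

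Next I bring in the no-multiple-zero hypothesis exactly as in the proof of Proposition \ref{prop:bigO}. Set $\delta=\min_{I}(|f|+|f'|)>0$. If $u\leqslant\eta_1:=\frac{3\delta}{M(B-A)^2}$, then $|f'(x)|\leqslant\delta/2$, hence $|f(x)|\geqslant\delta/2$. In particular $f(x)\neq0$, and combining the two bounds gives
\[
Q(x)\leqslant \frac{M u^2 (B-A)^2}{12\,\delta}.
\]
I then take $\eta_\varepsilon=\min\bigl(\eta_1,\sqrt{12\delta\varepsilon/(M(B-A)^2)}\bigr)$, which gives $Q(x)\leqslant\varepsilon$ whenever $u\leqslant\eta_\varepsilon$. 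This proves the contrapositive, and $\eta_\varepsilon$ depends only on $M$, $\delta$, $B-A$ and $\varepsilon$, that is, only on $f$, $I$ and $\varepsilon$. If $M=0$, then $f$ is a cubic and $E\equiv0$, so $Q\equiv0$ and any $\eta_\varepsilon$ works. The statement near $b$ follows by the symmetry $x\mapsto -x$, which exchanges $a$ and $b$ and preserves the Hermite interpolant and $Q$.

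The main obstacle I expect is the careful handling of degenerate cases. One is the continuous extension of $Q$ at points where $f'(x)=h'(x)=0$; there I only need the bound ``second factor $\leqslant1$''. The other is making sure the error estimate for $E'$ really carries the factor $u=x-a$ and not only $L^3$. For that I use that $E'$ vanishes at $a$, at $m_x$, and at $b$, and apply the Lagrange remainder for $E'$ at those three nodes, exactly as in the proof of Fact \ref{fact:hermite}.
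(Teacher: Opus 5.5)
Your argument is correct. It differs from the paper's in the one step that carries the weight: the lower bound on $|f(x)|$ at a critical point $x$ of $h$ close to $a$.

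The paper proceeds in three moves:
\begin{itemize}
\item it bounds $|h''|\leqslant C$ on $I$, using the $E''$ estimate of Fact~\ref{fact:hermite} together with a bound on $f''$;
\item it applies the mean value theorem to get $|f'(a)|=|h'(a)-h'(x)|\leqslant C(x-a)$, hence $|f(a)|\geqslant\delta/2$;
\item it moves back to $x$ by continuity of $f$, shrinking $\eta$ so that $|f(x)|\geqslant\delta/4$.
\end{itemize}

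You work directly at $x$ instead. Since $h'(x)=0$, you have $f'(x)=E'(x)$. The three-node remainder for $E'$ (which vanishes at $a$, at the Rolle point and at $b$) retains the factor $x-a$. This gives $|f'(x)|\leqslant\frac{M}{6}(x-a)(B-A)^2$, and $|f(x)|\geqslant\delta/2$ follows in one step.

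What your route buys is a fully explicit $\eta_\varepsilon=\min\bigl(\frac{3\delta}{M(B-A)^2},\frac{1}{B-A}\sqrt{12\delta\varepsilon/M}\bigr)$. It depends only on $M$, $\delta$, $B-A$ and $\varepsilon$, with no need for $\|f''\|_\infty$ or a Lipschitz constant for $f$, and it has a slightly better constant. The paper's route, in exchange, gets by with the cruder uniform $E''$ bound and the mean value theorem, rather than the pointwise form of the $E'$ remainder.

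Two smaller points in your write-up make explicit what the paper's proof leaves implicit. When $f'(x)=h'(x)=0$, the second factor of $Q$ is only $\leqslant 1$, not $=1$ as the paper states. And $x$ must be taken in $[a,b]$.
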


\begin{proof}
First, if $f$ is a polynomial of degree at most 3, then $Q$ is identically 0
as $f = h$ and the result is immediate. Otherwise, we denote $M =
\|f^{(4)}\|_\infty > 0$.
Since $f$ has no multiple zero and $I$ is compact,
we define, as in the previous proof,
\[
\delta=\min_{t\in I}\bigl(|f(t)|+|f'(t)|\bigr)>0.
\]
Using the last inequality in fact \ref{fact:hermite},
there exists a constant $C$, depending only on $f$ and $I$, such that
$|h''(t)|\leqslant C$ for any $t\in I$.
Consequently, if $h'(x)=0$,
\[
|f'(a)|
=|h'(a)-h'(x)|
\leqslant C(x-a).
\]
Choose $\eta = \frac{\delta}{2C}$. If $x-a<\eta$, then
$C(x-a) < \frac{\delta}{2}$ and
\[
|f(a)|
\geqslant \delta-|f'(a)|
\geqslant \frac{\delta}{2}.
\]
By continuity of $f$, after possibly decreasing $\eta$, using a uniform bound
on $I$ for $|f'|$ we also have
\[
|f(x)|\geqslant\frac{\delta}{4} \quad (\text{if } h'(x) = 0, x-a < \eta).
\]
The Hermite interpolation remainder from fact \ref{fact:hermite} gives
\[
|f(x)-h(x)| \leqslant \frac{M}{24} (x-a)^2(b-x)^2.
\]
Since the second factor in the definition of $Q$ is $1$ when $h'(x) = 0$,
\[
Q(x)
=
\frac{|f(x)-h(x)|}{|f(x)|+|h(x)|}
\leqslant
\frac{M}{6\delta}(x-a)^2(b-x)^2
\leqslant
\frac{M}{6\delta}
(x-a)^2(b-a)^2.
\]
Let us define
\[
\eta_\varepsilon = \min\left(\eta,\frac{1}{B-A}\sqrt{\frac{6\varepsilon\delta}{M}}\right)
\]
Thus, from $B - A \geqslant b - a$ and $(x - a) \leqslant \eta_\varepsilon$ we have
\[
Q(x) \leqslant \frac{M}{6\delta}
\eta_\varepsilon^2(b-a)^2\leqslant \varepsilon \frac{(b-a)^2 }{(B-A)^2} \leqslant \varepsilon.
\]
The result follows by contraposition.
The proof near $b$ is identical.
\end{proof}

\begin{definition}\label{def:good}
  Let $\varepsilon \in ]0,1[$ and $n \in \mathbb{N}$, we say that $[a,b]$ is $(\varepsilon,n)$-good
  If $Q(x) < \varepsilon$ for all the test points chosen as follows:
\begin{itemize}
\item  The first test points are the real roots of the derivative $h'$ that
  belong to $]a,b[$. These points correspond to the critical points of $h$
      and are therefore natural locations. Note for such points, as $h'(x) =
      0$, we have  $Q(x) = \frac{|f(x) - h(x)|}{|f(x)| + |h(x)|}$ and
      $Q(x) < 1$ implies that $h$ and $f$ have the same sign.

    \item The next evaluation points are $n$ points $x_1,\dots,x_n$ with
      $x_k = a + k \frac{b-a}{n+1}$ (hence $a,x_1,\dots,x_n,b$ are regularly spaced).
      Remark: we don't know if this choice is optimal.
\end{itemize}
\end{definition}

\begin{algorithm}[root isolation algorithm]
Our algorithm is a recursive function that considers an interval $[a,b]$.
It takes as fixed parameters a way to compute the value of a function $f$ and its
derivative, $\varepsilon \in ]0,1[$ and $n \in \mathbb{N}^*$.
      It returns a list of sub-intervals of $[a,b]$ and processes as follows:
\begin{itemize}
  \item
    If the interval $[a,b]$ under consideration is not $(\varepsilon,n)-good$ we subdivide
    the interval at a tested point that maximizes $Q$.
  \item Otherwise we use the sign
    changes of $h$ (we computed the roots of $h'$ to test for goodness) to collect at most 3 intervals with exactly one root of $h$
    and at least one root of $f$. The lower bound on the number of roots of
    $f$ follows from the fact that $h$ and $f$ have the same sign in $a$,
    $b$ and the roots of $h'$ (see the remark in the first item of definition \ref{def:good}).
\end{itemize}
\end{algorithm}

Remark: We have not tested approaches using Newton step to improve performance in the
case of clustered roots like for Mignotte polynomials as in \cite{KAR2016}.

\begin{proposition}
  Our root isolation algorithm terminates for an analytic function $f$ on an
  open set $\Omega \subset \mathbb{R}$, with no multiple roots and
  starting from any interval $I = [A,B] \subset \Omega$. It gives a list of intervals in $I$ that
  contain at least one root of $f$ each. This is true for any value of $\varepsilon \in
    ]0,1[$ and $n \in \mathbb{N}^*$.
\end{proposition}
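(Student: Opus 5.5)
Two things must be proved: termination and correctness. Correctness is the easier one, so I dispose of it first.

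\emph{Correctness.} The algorithm only outputs intervals when $[a,b]$ is $(\varepsilon,n)$-good. In that case $Q(x)<\varepsilon<1$ at every critical point $x$ of $h$ in $]a,b[$. By the fact characterising $Q(x)=1$, and since $h'(x)=0$ there, this gives $\sgn(f(x)h(x))>0$, so $f$ and $h$ have the same strict sign at those points. At $a$ and $b$ we have $f=h$ by interpolation. The output intervals are delimited by consecutive points of the set $\{a,b\}\cup\{\text{roots of } h' \text{ in } ]a,b[\}$ where $h$ changes sign. At these endpoints $f$ also changes sign, so the intermediate value theorem gives at least one root of $f$ in each output interval. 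Two degenerate cases need a word: $f(a)=0$ or $f(b)=0$, which is handled by including the closed endpoint; and $h$ vanishing identically, which forces $f(a)=f(b)=f'(a)=f'(b)=0$ and is excluded by the absence of multiple roots.

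\emph{Termination.} I argue by contradiction using König's lemma. The recursion tree is binary, so if it were infinite there would be an infinite branch of nested intervals $[a_k,b_k]$, none of them good. The main point is to show that the lengths $L_k=b_k-a_k$ tend to $0$, i.e.\ that each split point is not too close to an endpoint.

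\begin{itemize}
\item If the maximiser of $Q$ is a root of $h'$ with $Q>\varepsilon$ (non-goodness forces $Q\ge\varepsilon$; I take $Q\geqslant\varepsilon$ and replace $\varepsilon$ by $\varepsilon/2$ in Proposition \ref{prop:side}), then Proposition \ref{prop:side} puts the split point at distance at least $\eta_{\varepsilon/2}$ from both endpoints. This $\eta$ depends only on $f$, $I$ and $\varepsilon$, so it is uniform along the branch.
\item If the maximiser is one of the regular points $x_j$, both pieces have length at most $\frac{n}{n+1}L_k$.
\end{itemize}

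In either case $L_{k+1}\le\max\left(L_k-\eta_{\varepsilon/2},\frac{n}{n+1}L_k\right)$, so $L_k\to 0$. But by Proposition \ref{prop:bigO}, once $L_k$ is small enough we have $\sup_{[a_k,b_k]}Q<\varepsilon$, and that interval is good. This contradicts the branch being infinite, so the recursion terminates.

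The main obstacle is that the constant hidden in the $O((b-a)^3)$ of Proposition \ref{prop:bigO} must be uniform over all subintervals of $I$. Its proof only uses $M$ and $\delta$ computed on $I$, so this holds, but I would state it explicitly. A secondary subtlety is the strict versus non-strict inequality in Proposition \ref{prop:side}, which the $\varepsilon/2$ trick handles.
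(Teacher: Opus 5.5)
Your proof is correct and follows the same route as the paper. Proposition~\ref{prop:side} keeps each split point at distance at least $\eta$ from both endpoints, so interval lengths along any branch shrink to zero, and Proposition~\ref{prop:bigO} then makes short enough intervals $(\varepsilon,n)$-good; correctness comes from $f$ and $h$ having the same sign at $a$, $b$ and the critical points of $h$. Your additions make rigorous steps the paper leaves implicit: the K\"onig's lemma framing, the recurrence $L_{k+1}\le\max(L_k-\eta_{\varepsilon/2},\frac{n}{n+1}L_k)$, the uniformity of the $O$-constant over subintervals of $I$, and the $\varepsilon/2$ device for the strict versus non-strict inequality.
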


\begin{proof}
  Termination is a consequence of proposition \ref{prop:bigO} and
  \ref{prop:side}. The former ensures that small enough interval are
  $(\varepsilon,n)-good$. The latter ensure that intervals decrease to a $0$ width.
  Indeed, if the subdivision point is one of the regularly spaced test points, this is
  immediate. If the subdivision point $x$ is a root of $h'$, the derivative of the
  interpolant of $f$ on the considered interval $[a,b]$, the latter lemma
  ensures that $x - a > \eta_\varepsilon$ and $b - x  > \eta_\varepsilon$ Hence
  $x - a < (b - a) - \eta_\varepsilon$ and $b - x < (b - a) - \eta_\varepsilon$.

  The fact
  that each interval contains at least one root of $f$ is explained in the
  description of the algorithm.
\end{proof}

\begin{conjecture}
  For any analytic function with no multiple
  roots, we can find $\varepsilon \in ]0,1[$ and $n \in \mathbb{N}^*$ such that the above algorithm misses
      no roots (never or with very low probability, we do not know if we can get
      an exact algorithm). Moreover, our experiments seem to show that $\varepsilon$ and $n$ may
      be relatively small and independent of the degree and the size of
      interval for many families of polynomials.
\end{conjecture}

\section{Experimental Results}

\subsection{Software availability.}
All the code used in this work is publicly available. The pure root isolation
algorithm is implemented in Julia and can be used independently for reproducing
the numerical experiments. The implicit surface visualization experiments rely
on the CamlSurf framework, available at
\url{https://github.com/craff/CamlSurf}. In CamlSurf, the root finding
algorithm has been rewritten in the GLSL language.

\subsection{Tests on classical polynomials}

The algorithm has been tested on several classical families of
polynomials, including Chebyshev polynomials, Wilkinson polynomials,
Mignotte polynomials and randomly generated examples.

We give below the definition used to compute the value of $P$ which we also
differentiate to compute the value of $P'$. We also give the interval in
which we search roots which is often larger than the obvious one. As we work
with arbitrary function, we do not use classical bounds for polynomial roots.

For most of these tests, we use the following values for the constants:
\[\varepsilon = 0.2 \hspace{1cm} n = 3\]. We indicate when we need larger values to
avoid missing roots.

Far from the roots, our algorithm often needs to many subdivisions because a
polynomial of degree 3 cannot approximate well a higher-degree polynomials
far from the origin. An idea we tested is composing with the function $\Phi_\lambda(x)
= \lambda \sinh^{-1}(\frac{x}{\lambda})$. In works well in some cases, not in others
(notably Wilkinson's Polynomials)

For our algorithm, we give three curves:
\begin{itemize}
\item Number of subdivisions
\item Time to isolate the roots
\item Time to isolate the roots and get the maximum possible precision using
  dichotomie (we stop when there is no more floating point value in the
  interval).
\end{itemize}

For polynomials (i.e. not when composing with $\Phi_\lambda$), we compare our algorithm
with two widely available implementations of root isolation : pari/gp and RS
(available in maple or Julia). The library RS follows Rouillier and Zimmermann work
\cite{ROUILLIER200433} using Descartes' rule while Pari uses a modification of
Uspensky's method follwing the same work of Rouillier and Zimmermann. We
consistently outperform both, sometimes by a large margin. This can be
attributed to the fact that our algorithm can compute with much smaller
precision (i.e. 64-bit float), and the fact that for some computation, we do
not develop the polynomial. We even use the trigonometric formulation for
Chebyshev Polynomials.

\paragraph{Legendre polynomials}

\[P_0 = 1, P_1 = X, P_n = (2n - 1) P_{n-1} - (n-1) P_{n-2} \hbox{ search
  on } [-10,10]\].

We get a linear number of subdivisions and a quadratic computing time because the
evaluation of this polynomial is in $O(n)$.

\begin{figure}
  \centering
  \includegraphics[width=0.47\textwidth]{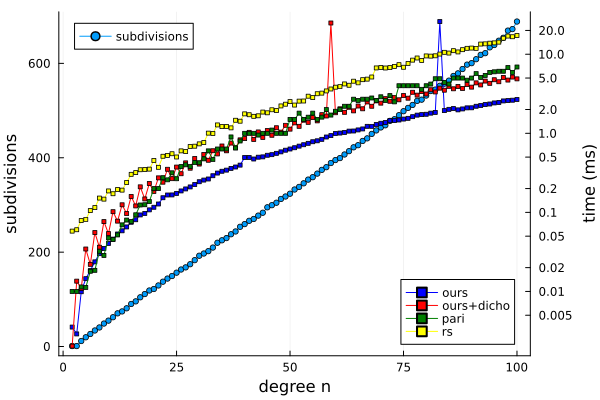}
  \includegraphics[width=0.47\textwidth]{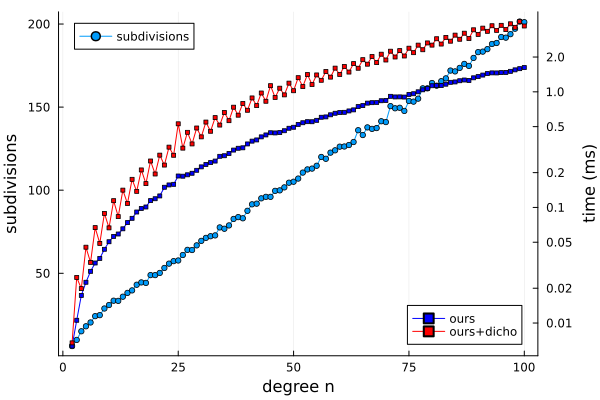}
  \caption{Experimental results for Legendre polynomials (composed with
    $\Phi_{10}$ on the right).\\ left: $\varepsilon=0.2,n=3$, right:  $\varepsilon=0.1,n=6$.}
  \label{fig:legendre}
\end{figure}

Moreover, usual algorithm, implemented with exact arithmetic, using Descartes'
rule of sign or Sturm sequences cannot reach such high degree without error
when they are implemented using only computation with 64-bit floating-point
arithmetic. Errors start to occur in general much sooner.

\paragraph{Chebyshev polynomials}

\[P_n = \cos(n \cos^{-1}(X)) \hbox{ search on } [-10,10]\]

Here we have a linear number of subdivisions and computing time.

\begin{figure}
  \centering
  \includegraphics[width=0.47\textwidth]{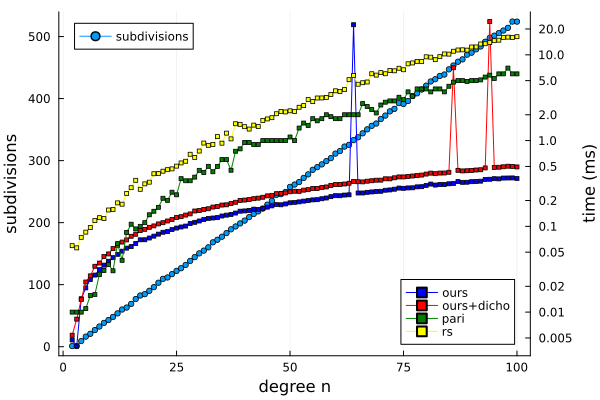}
  \includegraphics[width=0.47\textwidth]{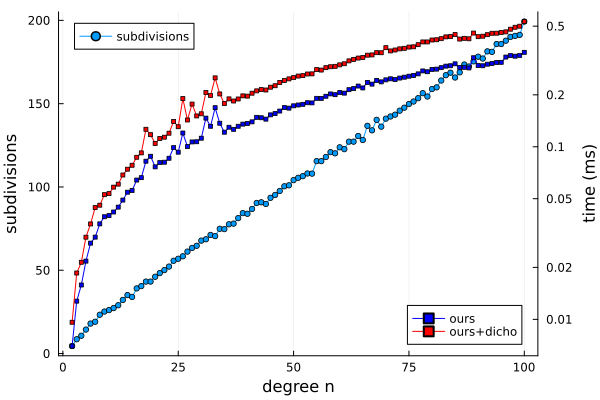}
  \caption{Experimental results for Chebyshev polynomials (composed with
    $\Phi_{10}$ on the right).\\ left: $\varepsilon=0.2,n=3$, right:  $\varepsilon=0.1,n=6$.}
  \label{fig:chebyshev}
\end{figure}

\paragraph{Wilkinson polynomials}

\[P_n = \Pi_{k=1}^n (x - k)  \hbox{ search on } [-n,2n]\]

\begin{figure}
  \centering
  \includegraphics[width=0.47\textwidth]{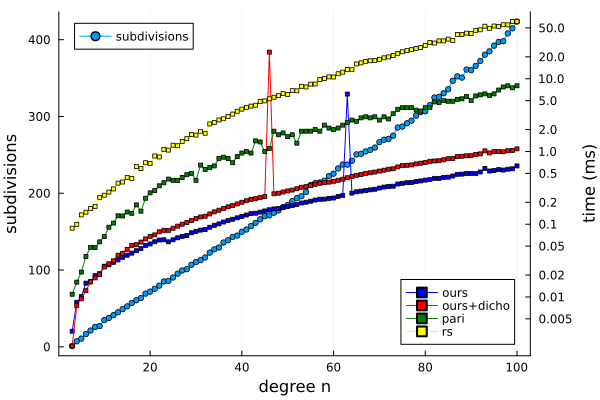}
  \includegraphics[width=0.47\textwidth]{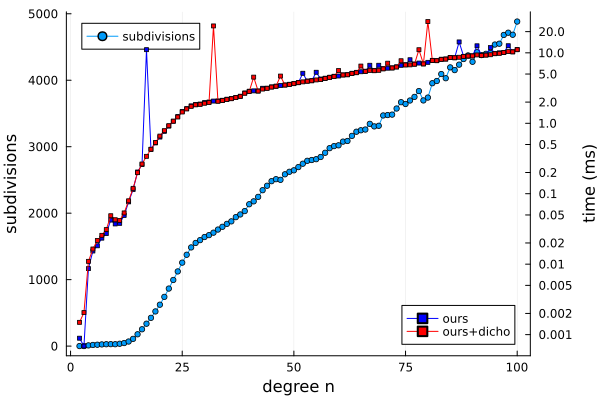}
  \caption{Experimental results for Wilkinson polynomials (composed with
    $\Phi_{10^{20}}$ on the right).\\ left: $\varepsilon=0.2,n=3$, right:  $\varepsilon=0.015,n=3$.}
  \label{fig:wilkinson}
\end{figure}

Wilkinson polynomials require more subdivisions, hence more time than the two
previous family. Still, it seems that the number of subdivisions is linear.
Moreover, the composition with $\Phi_\lambda$ does not work even for large value of
$\lambda$: it does not diminish the number of subdivisions and requires much smaller
$\varepsilon$. Still, running times are reasonable!

\paragraph{Geometric polynomials}

\[P_n = \Pi_{k=0}^{n-1} (x - 2^{-k})  \hbox{ search on } [-10,10]\]

\begin{figure}
  \centering
  \includegraphics[width=0.47\textwidth]{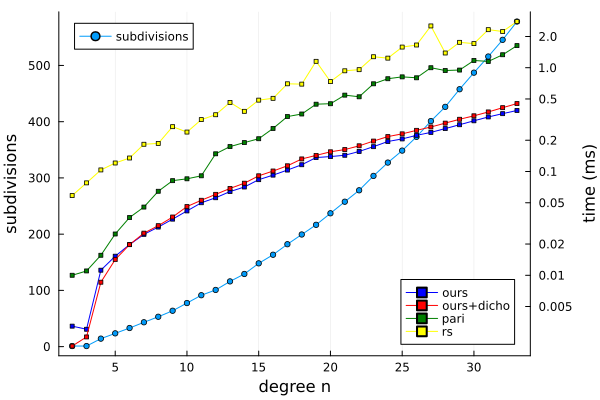}
  \includegraphics[width=0.47\textwidth]{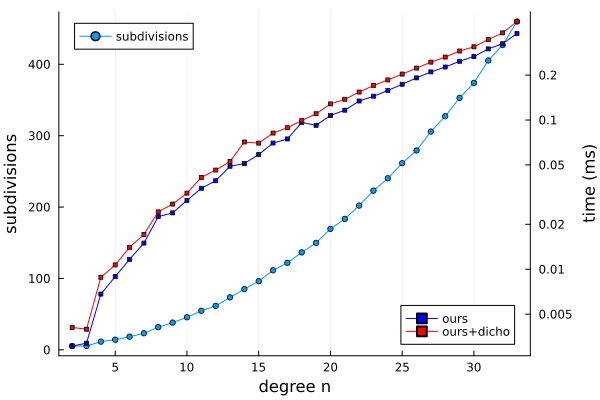}
  \caption{Experimental results for geometric polynomials (composed with
    $\Phi_{10}$ on the right).\\ left: $\varepsilon=0.2,n=3$, right:  $\varepsilon=0.15,n=3$.}
  \label{fig:geometric}
\end{figure}

These polynomials have a large cluster of roots near 0 and need more than a
linear number of subdivisions. Still the number of subdivisions is relatively
small.

\paragraph{Mignotte polynomials}

\[P_{n,p} = X^n - 2(2^p X - 1)^2 \hbox{ search on } [-M,M] \hbox{ with } M = \max\left(2,\frac{2^{2p+3}}{n-2}\right)\]

\begin{figure}
  \centering
  \includegraphics[width=0.47\textwidth]{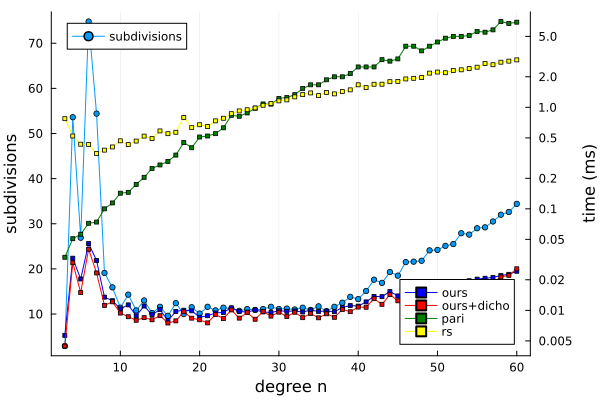}
  \includegraphics[width=0.47\textwidth]{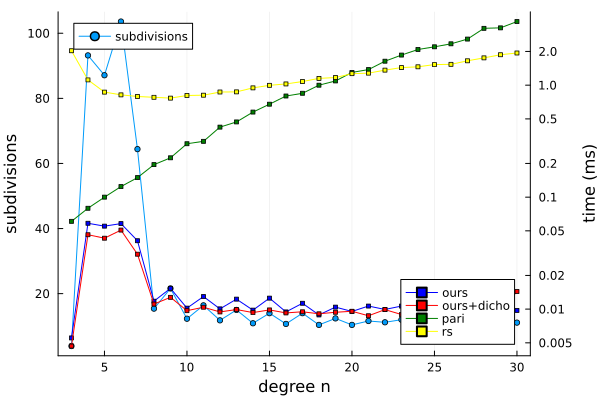}
  \caption{Experimental results for Mignotte $P_{n,16}$ and $P_{n,32}$
    polynomials.\\ left: $\varepsilon=0.15,n=3$, right:  $\varepsilon=0.15, n=3$.}
  \label{fig:mignotte}
\end{figure}

These polynomials have 3 or 4 roots including two very near
0. Again, double precision limits the degree we can reach: around 60 for
$P_{n,16}$ and 30 for $P_{n,32}$. Still the behaviour of the algorithm is very
good : a relatively constant number of subdivisions!

There is a surprising anomaly for low degree, because of an early subdivision near a root
of $f$. This is probably related to a very small $\eta_\varepsilon$ constant from proposition
\ref{prop:side} and a subdivision point to near a root of $f$.

\paragraph{Random polynomials}
Random test polynomials we generated in two ways:

First, uniformly for Bombieri-Weyl norm (leading to an average of $O(\sqrt{n})$
roots in degree $n$, left graphic) \cite{EDELMAN1995}
and choosing their roots following a normal
distribution $\mathcal{N}(0,2)$, therefore presenting a root cluster near $0$.

We use the interval $[\min(-10,3m),\max(10,3M)]$ where $m$ and $M$ are the
smallest and largest root of the polynomial in both cases.
\begin{figure}
  \centering
  \includegraphics[width=0.47\textwidth]{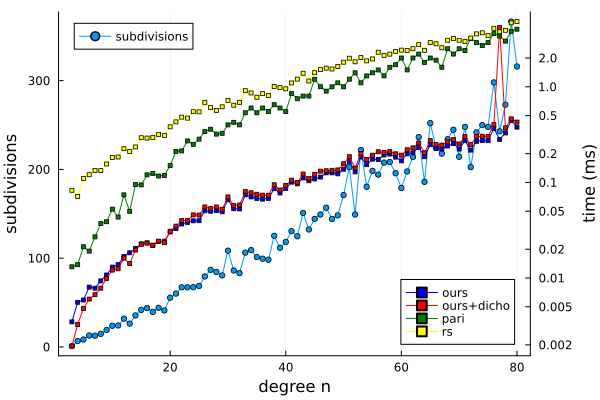}
  \includegraphics[width=0.47\textwidth]{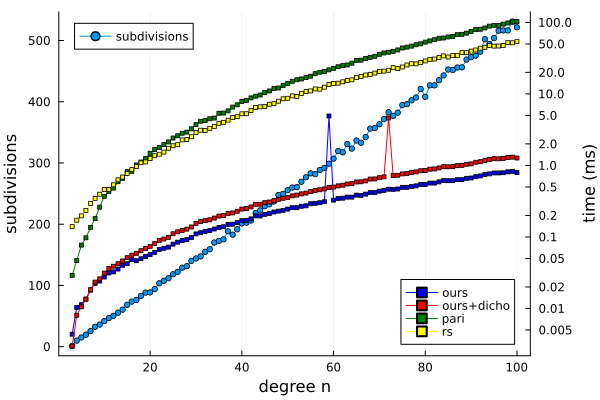}
  \caption{Experimental results for random polynomials (left: uniformly for
    Bombieri norm, right: random roots).\\ left: $\varepsilon=0.15,n=3$, right:  $\varepsilon=0.1,n=3$.}
  \label{fig:random}
\end{figure}

Remark: polynomials with random roots require very large coefficients after
expansion. This is why our method is faster in that case.

\subsection{Experiments with visualization of implicit surfaces}

For these experiments to have (almost) no visible bad pixel, we use slightly
smaller constant, the same for all examples:
\[\varepsilon = 0.15 \hspace{1cm} n = 3\].

\begin{figure}
  \centering
  \includegraphics[width=0.6\textwidth]{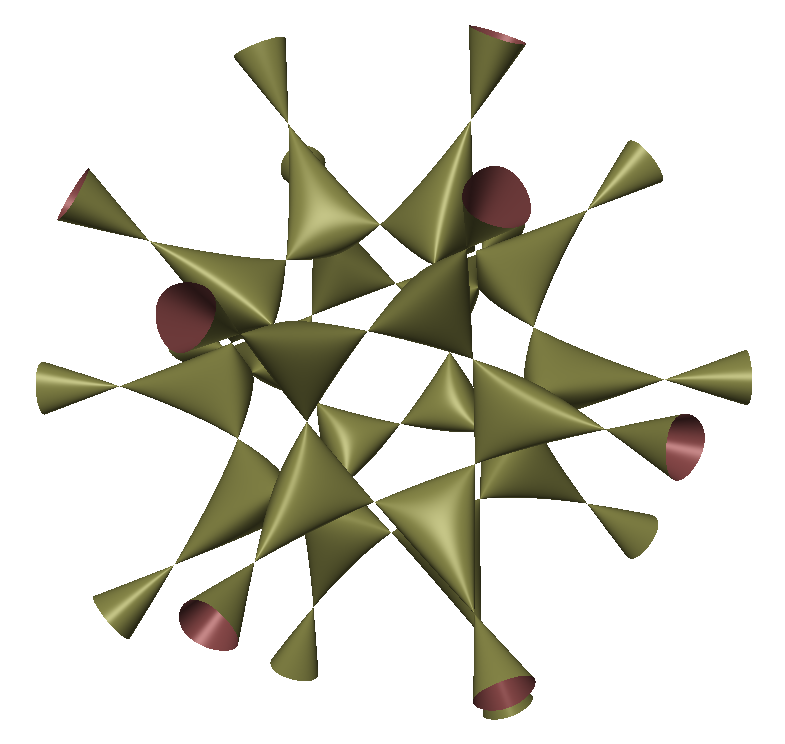}
  \caption{Barth sextic, 100 FPS.}
  \label{fig:barth}
\end{figure}

The proposed interpolation criterion is also used in an implicit surface
visualization framework. The algorithm is integrated into a GPU-based renderer
using single precision floating-point arithmetic (IEEE 754 float32).

In most of our examples, we give the most complex here, the rendering rate
remains above 30 FPS (frames per second), allowing
interactive exploration and animation of the surfaces.
One exception is the
heptic surface from O. Labs, whose very large algebraic expression leads to a
substantial increase in evaluation cost at only 20 FPS. We also show two non
polynomial functions. Their high number of roots and the use of
transcendental functions also require more time.

\begin{figure}
  \centering
  \includegraphics[width=0.6\textwidth]{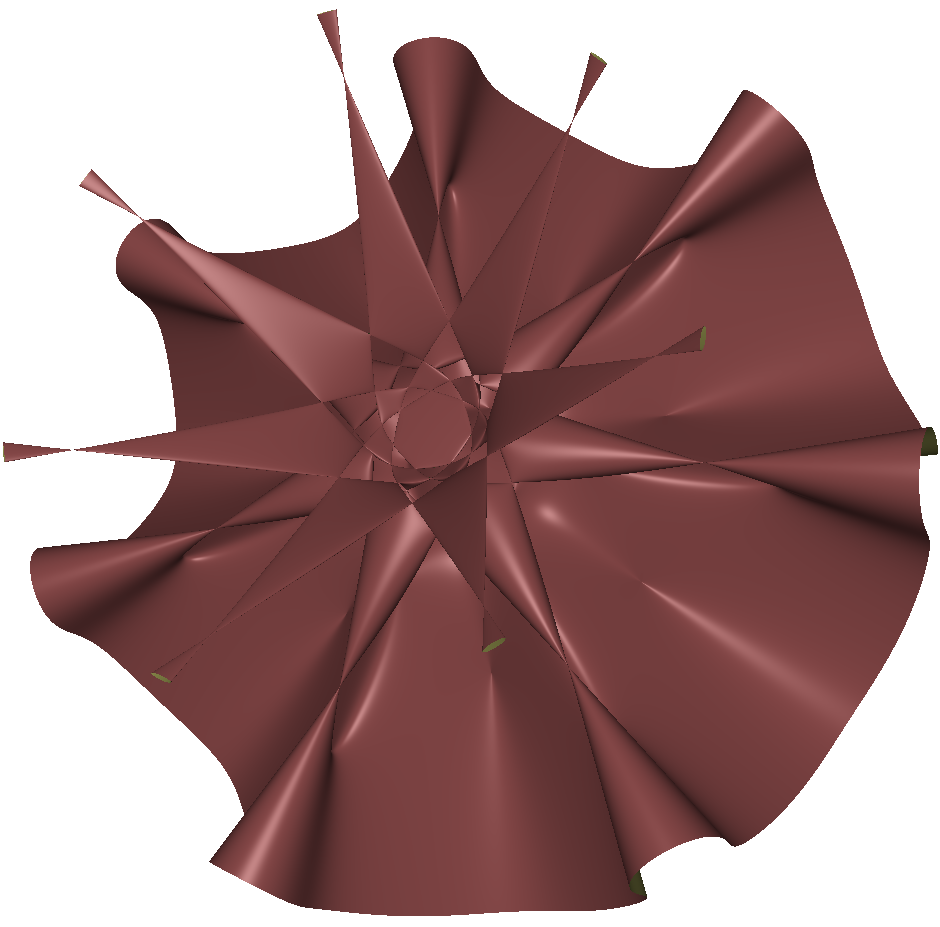}
  \caption{Labs heptic, 20 FPS.}
  \label{fig:labs7}
\end{figure}

Despite the use of single precision arithmetic and the absence of exact
certification, no visible numerical artifact or incorrect root
classification was observed in our experiments. These results suggest that the
local interpolation test remains highly reliable even in a demanding real-time
geometric setting.

Game: there are ~4 bad pixels in the figure \ref{fig:trigo1} that was produced
with an early version of the algorithm. The latest implementation following
exactly the algorithm in this paper does not feature any bad pixel with the
above mentioned value of $\varepsilon = 0.15$ and $n = 3$.

The experiments were conducted using only an integrated graphics processor,
an Intel UHD Graphics based on the Alder Lake-P GT1 architecture, rather than a
dedicated GPU. Even with this relatively limited hardware, the proposed
approach achieves interactive frame rates in most cases.

\begin{figure}
  \centering
  \includegraphics[width=0.6\textwidth]{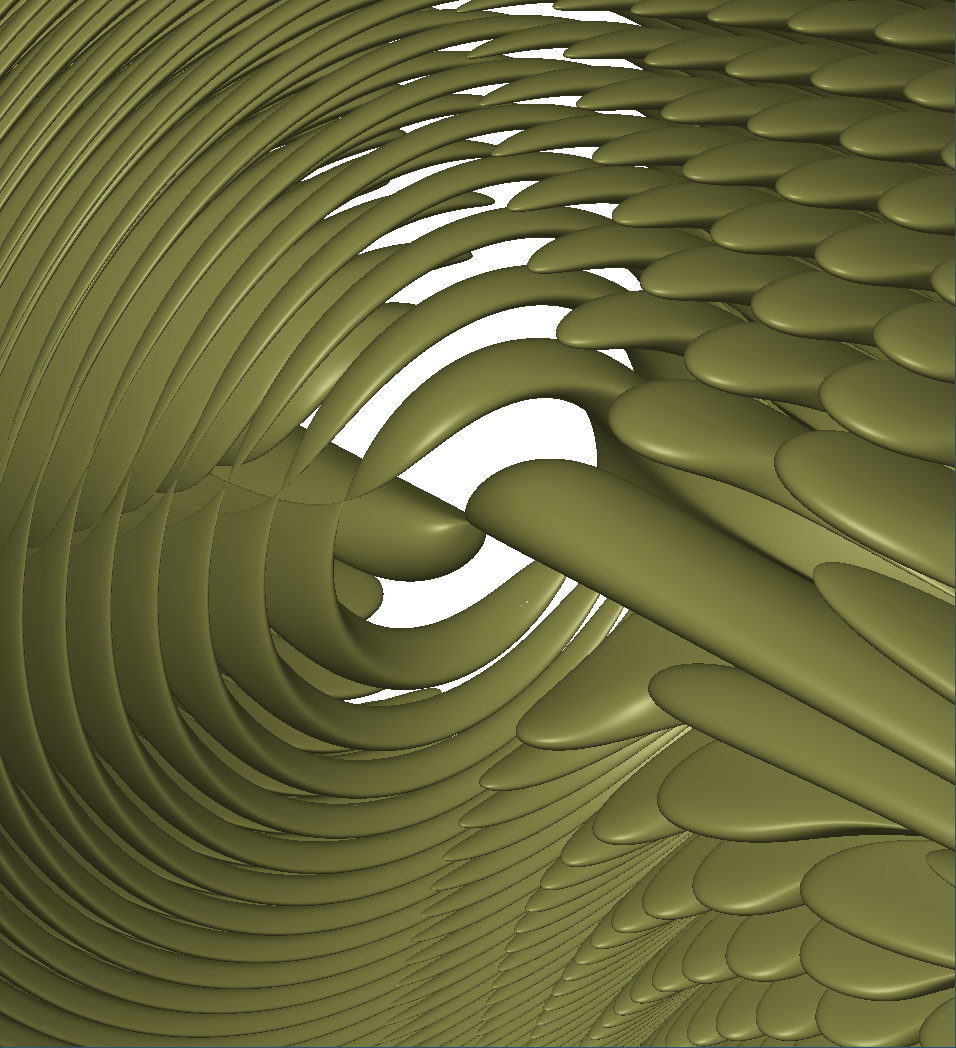}
  \caption{$\cos(y^2+z^2)-\sin(xy)-e^{\frac{xz}{2}} = 0$, 30 FPS.}
  \label{fig:trigo1}
\end{figure}

\begin{figure}
  \centering
  \includegraphics[width=0.6\textwidth]{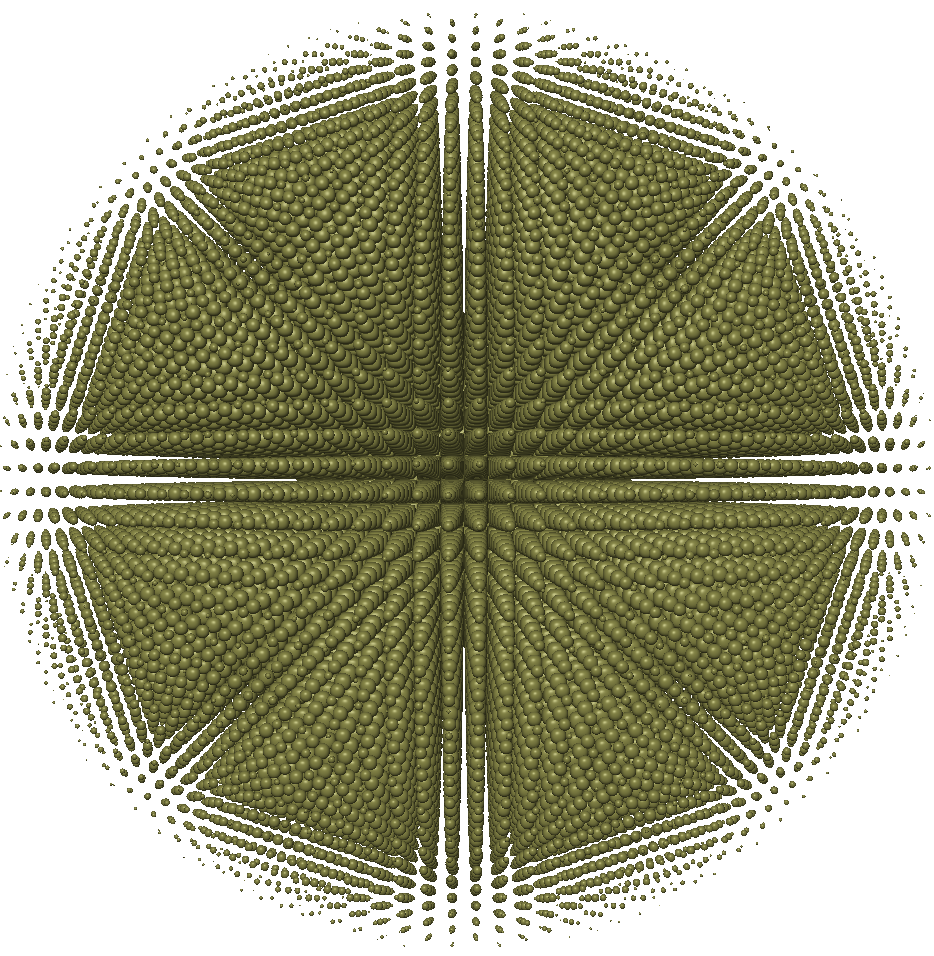}
  \caption{$\sin(20 x)\sin(20 y)\sin(20 z) - \frac{1}{10} (x^2+y^2+z^2-1) = 0$, 10 FPS.}
  \label{fig:trigo2}
\end{figure}

\bibliography{biblio}
\bibliographystyle{plain}

\end{document}